\documentclass[11pt,a4paper]{amsart}
\usepackage[top=4cm,bottom=4cm,left=3.5cm,right=3.5cm]{geometry}

\usepackage{amsmath,amssymb,amsthm}
\usepackage[english]{babel}
\usepackage{mathtools}
\usepackage{enumerate}
\usepackage[hidelinks]{hyperref}
\usepackage{tikz}

\usepackage{xcolor}

\usepackage{algorithm}
\usepackage[noend]{algpseudocode}

\DeclareMathOperator{\Id}{Id}

\newtheorem{theorem}{Theorem}

\newtheorem{lemma}[theorem]{Lemma}
\newtheorem{proposition}[theorem]{Proposition}

\theoremstyle{definition}

\newtheorem{definition}[theorem]{Definition}
\theoremstyle{remark}
\newtheorem{remark}[theorem]{Remark}
\newtheorem{example}[theorem]{Example}

\numberwithin{equation}{section}

\begin{document}

\title[Information Set Decoding in the Lee Metric]{Information Set Decoding in the Lee Metric with Applications to Cryptography}

\author[A.-L. Horlemann-Trautmann]{Anna-Lena Horlemann-Trautmann}
\address{Faculty of Mathematics and Statistics\\
University of St. Gallen\\
Bodanstr. 6\\
St. Gallen, Switzerland\\
}
\email{anna-lena.horlemann@unisg.ch}

\author[V. Weger]{Violetta Weger}
\address{Institute of Mathematics\\
University of Zurich\\
Winterthurerstrasse 190\\
8057 Zurich, Switzerland\\
}
\email{violetta.weger@math.uzh.ch}

\subjclass[2010]{}

\keywords{Coding Theory; Cryptography; McEliece System; ring-linear Codes; Lee Metric.}

\begin{abstract}
We convert Stern's information set decoding (ISD) algorithm to the ring $\mathbb{Z}/4 \mathbb{Z}$ equipped with the Lee metric. Moreover, we set up the general framework for a McEliece and a Niederreiter cryptosystem over this ring. The complexity of the ISD algorithm determines the minimum key size in these cryptosystems for a given security level. We show that using Lee metric codes can substantially decrease the key size, compared to Hamming metric codes. In the end we explain how our results can be generalized to other Galois rings $\mathbb{Z}/p^s\mathbb{Z}$. 
\end{abstract}

\maketitle

\section{Introduction}
\label{sec:introduction}

The hardness of decoding random linear codes is at the heart of any code-based public key cryptosystem.  
Information set decoding (ISD) algorithms are the main method for decoding random linear codes in the Hamming metric, whenever the problem has only a few solutions.  An ISD algorithm is given a corrupted codeword and recovers the message or equivalently finds the error vector.  Such algorithms are often formulated via the parity check matrix, since it is enough to find a vector of a certain weight which has the same syndrome as the corrupted codeword -- this problem is also referred to as the syndrome decoding problem.  ISD algorithms over the binary are based on a decoding algorithm proposed by Prange \cite{prange} in 1962 and the main structure of the variants do not change much from the original: as a first step one chooses an information set, then Gaussian elimination brings the parity check matrix into a standard form and, assuming that the errors are outside of the information set, these row operations on the syndrome will recover the error vector, if the weight does not exceed the given error correction capacity.

ISD algorithms are of immense importance when proposing a code-based cryptosystem. 
The idea of using linear codes in public key cryptography was first formulated by Robert McEliece \cite{mceliece}, in 1978. In the McEliece cryptosystem the private key is the generator matrix of a linear code with an efficient decoding algorithm. The public key is a scrambled and disguised version of the generator matrix, such that the private key (and hence the decoding algorithm) is not reconstructable from the public key. The message is encrypted by encoding it with the generator matrix and adding a random error of prescribed Hamming weight. The owner of the private key can recover the message by inverting the disguising function and using the efficient decoding algorithm. On the other hand, if the secret code is hidden well enough, an adversary who wants to break the system encounters the decoding problem of a random linear code, since the public code looks random to him. The best the adversary can do is hence to use the best generic decoding algorithm for random linear codes, which currently are ISD algorithms. ISD algorithms hence do not break a code-based cryptosystem but they determine the choice of secure parameters.

One of the main drawbacks of classical code-based cryptosystems are the public key sizes. To reduce these key sizes, over the last years, many variants of code-based cryptosystems have been proposed that use codes in the rank-metric, instead of the Hamming metric. This raises the question if other metrics can be useful, as well. This is why we study codes in the Lee metric for code-based cryptography in this paper, and show that, for theoretical code parameters\footnote{By theoretical parameters we mean codes attaining the Gilbert-Varshamov bound.},  the Lee metric can also lead to a substantial reduction of the public key size.

We will focus on codes that are defined over integer residue rings $\mathbb{Z}_m:=\mathbb{Z}/m\mathbb{Z}$, for some integer $m>1$, equipped with the Lee metric. 
In particular, we are going to use ring-linear codes, which are defined to be 
  $\mathbb{Z}_m$-submodules of $\mathbb{Z}_m^n$. We especially focus on quaternary codes, which are defined over $\mathbb{Z}_4$, since this case has been studied the most in the coding theory literature. In general, ring-linear codes were first mentioned by Assmus and Mattson in \cite{ringlin}, for important results see  \cite{blake2, blake1, Ker, nechaev,  saty, shankar,  spiegel}, for a more general overview see \cite{gref}. The idea of using ring-linear codes for cryptography (in a quite different setting) first came up in \cite{ringlincrypto}.

We note that, although $\mathbb Z_4$-linear Lee codes can be represented over $\mathbb F_2$, there exists no representation that preserves both the weight and the linearity of the $\mathbb Z_4$-code over $\mathbb F_2$. Thus the known results over $\mathbb F_2$ cannot be used for the Lee metric. 
This is why this paper presents the adaption of ISD algorithms over the binary \cite{stern} to $\mathbb{Z}_4$ and a general form of the McEliece and Niederreiter cryptosystems over $\mathbb{Z}_4$. The complexity of the ISD algorithm then determines a minimum public key size for a given security level of these cryptosystems. 
The paper is structured as follows: in Section \ref{preliminaries} we introduce the theory of ring-linear codes, especially Lee codes, two ISD algorithms over the binary \cite{stern} and the notations and concepts involved in the algorithms. In Section \ref{ISD over Z4} we present the  adaption of the ISD algorithms over the binary to $\mathbb{Z}_4$, including a complexity analysis. In Section \ref{applications to mceliece over Z4} we cover the applications of the ISD algorithm over $\mathbb{Z}_4$ to code-based cryptography by stating the general McEliece and Niederreiter cryptosystems using quaternary codes. In this context we will also investigate the key size of such a cryptosystem  using theoretical values for the secret  quaternary code regarding   128 bit  security against our ISD algorithms over $\mathbb{Z}_4$, from Section \ref{ISD over Z4}. In Section \ref{generalize} we explain briefly how one can generalize the ISD algorithm as well as the McEliece system to other Galois rings $\mathbb{Z}_{p^s}$.
We will then conclude this paper in Section \ref{conclusion} and add some open questions and problems.

\section{Preliminaries}\label{preliminaries}

In this section we present the main theory and tools of ring-linear codes, especially  Lee codes, as well as some known binary ISD algorithms and the concepts and notations involved. 

\subsection{Ring-linear coding theory}

In traditional finite field coding theory an $[n,k]$ linear code $\mathcal{C}$ over $\mathbb{F}_q$ is a linear subspace of $\mathbb{F}_q^n$ of dimension $k$. One can generalize this by taking a finite ring $R$ instead of $\mathbb{F}_q$. 

Let us assume for simplicity that $R$ is commutative, but observe that the following stays true in the noncommutative case.
\begin{definition}
Let $k$ and $n$ be  positive integers and let $R$ be a finite ring. $\mathcal{C}$ is called an \emph{$R$-linear code of length $n$ of type $h$}, if $\mathcal{C}$ is a submodule of $R^n$, with $\mid \mathcal{C} \mid = h$. 
\end{definition}

We will restrict to the most preferred case of $\mathbb{Z}_m:=\mathbb{Z}/m\mathbb{Z}$, for some $m \in \mathbb{N}$. In particular, we will formulate most of our results for $m=4$, since this case has been studied the most.

\begin{definition} 
We say that $\mathcal{C}$ is a \emph{quaternary linear code} of length $n$, if $\mathcal{C}$ is an additive subgroup of  $\mathbb{Z}_4^n$.
 \end{definition}

In traditional finite field coding theory we endow $\mathbb{F}_q^n$ with the Hamming metric to define  the weight of a codeword $\text{wt}_H$ and the distance of codewords $d_H$. In ring-linear coding theory over $\mathbb{Z}_m$ we could use the Hamming metric, the Lee   metric, the homogeneous metric, the Euclidean metric and so on, for an overview see \cite{gabidulinmetrics}. If we use the Lee metric the corresponding codes are referred to as \emph{Lee codes}. 

\begin{definition}
For $x \in \mathbb Z_m$ we define the \emph{Lee weight} to be
\begin{equation*}
\text{wt}_L(x) = \min\{ x ,  m - x \},
\end{equation*}
similarly for  $x \in \mathbb Z_m^n$ we define the Lee weight to be the sum of the Lee weights of its coordinates:
\begin{equation*}
\text{wt}_L(x) =  \sum_{i=1}^n \text{wt}_L(x_i).
\end{equation*}
%
For $x,y \in \mathbb Z_m^n$, the \emph{Lee distance} is defined to be
\begin{equation*}
d_L(x,y) = \text{wt}_L(x-y).
\end{equation*}
\end{definition}

There is a connection between traditional finite field coding theory and $\mathbb Z_4$-linear coding theory via the Gray map:

\begin{definition}
The Gray map is an isometry between $(\mathbb{Z}_4, \text{wt}_L)$ and $(\mathbb{F}_2^2, \text{wt}_H)$ and  is defined as follows:
\begin{eqnarray*}
\phi: (\mathbb{Z}_4, \text{wt}_L) & \to  & (\mathbb{F}_2^2, \text{wt}_H) \\
0 & \mapsto & (0,0), \\
1 & \mapsto & (0,1), \\
2 & \mapsto & (1,1), \\
3 & \mapsto & (1,0).
\end{eqnarray*}
 The Gray map can be extended componentwise to 
\begin{equation*} 
 \overline{\phi}: (\mathbb{Z}_4^n, \text{wt}_L)  \to   (\mathbb{F}_2^{2n}, \text{wt}_H).
\end{equation*}
\end{definition}
Note however, that the Gray map does not preserve linearity, i.e., the image of a quaternary linear code is generally not linear over $\mathbb F_2$.

 We introduce the following notation: 
For a vector $v$ of length $n$, a matrix $A$ with $n$ columns, a code $\mathcal{C}$ of length $n$ and a set $I \subset \{1, \ldots, n\}$ we denote by $v_I$ the projection of $v$ to its coordinates indexed by $I$, and by $A_I$ the columns of $A$ indexed by $I$. Analogously we define $\mathcal{C}_I := \{ v_I \mid v \in \mathcal{C}\}$. 
 
 We will use the following definition of information set, since it fits perfectly in the context of ring-linear codes:  
 \begin{definition}
 For a code $\mathcal{C}$ over $\mathbb{F}_q$ of length $n$ and dimension $k$, we call a set $I \subseteq \{1, \ldots, n\}$ of size $k$ an \emph{information set} if $\mid \mathcal{C}_I \mid = \mid \mathcal{C} \mid$.
 \end{definition}
Similarly, we define  quaternary information sets for quaternary codes as follows:
 
 \begin{definition}
 For a code $\mathcal{C}$ over $\mathbb{Z}_4$ of length $n$ and type $4^{k_1}2^{k_2}$, we call a set $I \subseteq \{1, \ldots, n\}$ of size $k_1+k_2$ a \emph{(quaternary) information set} if $\mid \mathcal{C}_I \mid = \mid \mathcal{C} \mid$.
 \end{definition}

The following proposition defines the quaternary systematic form of the generator matrix and the parity check matrix of a quaternary code.  

\begin{proposition}[\cite{Ker}]
Let $\mathcal C$ be a quaternary linear code of length $n$ and type $4^{k_1}2^{k_2}$. Then $\mathcal C$ 
 is permutation equivalent to a code having the $(k_1 + k_2)\times n$ generator matrix
\begin{equation}\label{gen}
G = \begin{pmatrix}
\Id_{k_1} & A & B \\
0 & 2\Id_{k_2} & 2C
\end{pmatrix},
\end{equation}
where $A \in \mathbb{Z}_2^{k_1 \times k_2}$,  $B \in \mathbb{Z}_4^{k_1 \times (n-k_1-k_2)}$, $C \in \mathbb{Z}_2^{k_2 \times (n-k_1-k_2)}$, for some $k_1,k_2\in \mathbb N_0$. 

A parity check matrix of $\mathcal{C}$ is the corresponding permutation of the $(n-k_1) \times n$ matrix 
\begin{equation}\label{parity}
H= \begin{pmatrix}
-B^{\top} -C^{\top}A^{\top} & C^{\top} & \Id_{n-k_1-k_2} \\
2A^{\top} & 2\Id_{k_2} & 0
\end{pmatrix} =: \begin{pmatrix}
D & E & \Id_{n-k_1-k_2} \\
2F & 2\Id_{k_2} & 0
\end{pmatrix},
\end{equation}
where $D \in \mathbb{Z}_4^{(n-k_1-k_2) \times k_1}, E  \in \mathbb{Z}_2^{(n-k_1-k_2) \times k_2}, F \in \mathbb{Z}_2^{k_2 \times k_1}$. 
\end{proposition}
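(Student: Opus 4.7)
The plan is to proceed in two stages: first to put the generator matrix into the stated form via a Gaussian elimination tailored to $\mathbb{Z}_4$, then to verify the parity check claim by a block multiplication together with a cardinality count.

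For the generator form, I would scan the columns of an arbitrary generating matrix for $\mathcal{C}$ from left to right. If the current column contains a unit entry ($1$ or $3$) in some still-available row, I scale that row to put $1$ in the pivot position and subtract $\mathbb{Z}_4$-multiples to clear the rest of the column. Otherwise every entry in the column lies in $2\mathbb{Z}_4=\{0,2\}$ and the column is deferred. This first pass produces $k_1$ unit-pivot rows; the rows not yet used as pivots now have entries entirely in $\{0,2\}$, so after formally dividing by $2$ they become a matrix over $\mathbb{F}_2$, to which ordinary Gaussian elimination applies. Reintroducing the factor $2$ yields $k_2$ further pivot columns with pivot value $2$. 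An appropriate column permutation then moves the two groups of pivots to the left; a final clean-up uses the rows of the $2\Id_{k_2}$ block to reduce the entries of $A$ modulo $2$ (subtractions of $0$ or $2$), and the entries of $C$ are only relevant modulo $2$ since they appear as $2C$. This realizes the displayed form for $G$, and the numbers $k_1,k_2$ are invariants of $\mathcal{C}$ rather than of the starting matrix, since $|\mathcal{C}|=4^{k_1}2^{k_2}$ and $|\{c\in\mathcal{C}:2c=0\}|=2^{k_1+k_2}$ are both intrinsic to $\mathcal{C}$.

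For the parity check part, I would verify $HG^T=0$ by a direct block multiplication. Once the substitutions $D=-B^T-C^TA^T$, $E=C^T$, and $F=A^T$ are made, the four blocks of $HG^T$ simplify as follows: the top-left becomes $D+EA^T+B^T=0$; the top-right becomes $2E+2C^T=4C^T=0$; the bottom-left becomes $2F+2A^T=4A^T=0$; and the bottom-right is $4\Id_{k_2}=0$. Thus $\mathrm{rowspan}(H)\subseteq\mathcal{C}^\perp$. For the reverse inclusion I would argue by counting: the dual module has cardinality $|\mathcal{C}^\perp|=4^n/|\mathcal{C}|=4^{n-k_1-k_2}\cdot 2^{k_2}$, and the rows of $H$ visibly generate a submodule of exactly this size, since the first $n-k_1-k_2$ rows are $\mathbb{Z}_4$-linearly independent thanks to the $\Id_{n-k_1-k_2}$ block (contributing $4^{n-k_1-k_2}$), while each of the last $k_2$ rows has order $2$ modulo the preceding ones by virtue of the $2\Id_{k_2}$ block (contributing a further factor $2^{k_2}$). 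Equality of cardinalities forces $\mathrm{rowspan}(H)=\mathcal{C}^\perp$, so $H$ is a parity check matrix for $\mathcal{C}$.

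The main obstacle is the first stage: Gaussian elimination over the non-field $\mathbb{Z}_4$ does not proceed as over a field, since non-unit entries cannot be inverted. The bifurcation into a unit-pivot phase and a $2$-pivot phase is the essential idea, and one must check that each phase can always be completed and that the invariants $k_1,k_2$ are well defined, which follows from the intrinsic description above. Once this combinatorial groundwork is done, the remainder, including the block-algebra verification of $H$, is essentially mechanical.
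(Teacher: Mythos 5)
Your argument is sound, and it is worth noting that the paper itself offers no proof to compare against: the proposition is imported verbatim from the cited reference \cite{Ker}, so what you have written is a self-contained derivation of a result the authors take as known. Your two-phase elimination (unit pivots first, then pivots in $2\mathbb{Z}_4$ after halving the residual rows into an $\mathbb{F}_2$-matrix) is exactly the standard route to the $\mathbb{Z}_4$-systematic form, and your observation that deferred columns keep their entries in $\{0,2\}$ under later clearing steps, together with the intrinsic characterization of $k_1,k_2$ via $|\mathcal C|=2^{2k_1+k_2}$ and $|\{c\in\mathcal C: 2c=0\}|=2^{k_1+k_2}$, closes the usual gaps in this kind of sketch. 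The block computation of $HG^T$ checks out ($D+EA^T+B^T=0$, and the remaining blocks vanish because they are multiples of $4$).

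The one place where you lean on an unproved ingredient is the identity $|\mathcal C^\perp|=4^n/|\mathcal C|$ (and, implicitly, $\mathcal C^{\perp\perp}=\mathcal C$ when you pass from ``rowspan$(H)=\mathcal C^\perp$'' to ``$H$ is a parity check matrix''). These are true for $\mathbb{Z}_4$ because it is a Frobenius ring, but they fail over general rings, so they deserve at least a citation. You can avoid them entirely and make the proof fully elementary: solve $Gx^T=0$ directly from the systematic form. The equation $2x_2+2Cx_3=0$ forces $x_2=-Cx_3+2y$ with $y\in\mathbb{Z}_2^{k_2}$ free, and then $x_1=-Ax_2-Bx_3$ is determined, so $\ker(x\mapsto Gx^T)$ is parametrized by $(x_3,y)\in\mathbb{Z}_4^{\,n-k_1-k_2}\times\mathbb{Z}_2^{k_2}$ and has exactly $4^{n-k_1-k_2}2^{k_2}$ elements, each of which is visibly in the row span of $H$. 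This replaces the counting-by-duality step with a direct computation and simultaneously shows that the rows of $H$ generate all of $\mathcal C^\perp$.
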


If we have a generator matrix of the form \eqref{gen}, to get a unique encoding, the messages need to be of the form  $m = (m_1, m_2)$, where $m_1 \in \mathbb{Z}_4^{k_1}$ and $m_2 \in \mathbb{Z}_2^{k_2}$. 
Encoding is done as follows:
\begin{equation*}
(m_1, m_2) \begin{pmatrix}
\Id_{k_1} & A & B \\
0 & 2\Id_{k_2} & 2C
\end{pmatrix} =\begin{pmatrix}
m_1^\top \\ (m_1A+2m_2)^\top \\ (m_1B+2m_2C)^\top
\end{pmatrix} = \begin{pmatrix}
c_1^\top \\ c_2^\top \\ c_3^\top
\end{pmatrix}.
\end{equation*}
Hence the codewords are of the form $c= (c_1, c_2, c_3)$, where $c_1 \in \mathbb{Z}_4^{k_1}, c_2 \in \mathbb{Z}_4^{k_2}$ and $c_3 \in \mathbb{Z}_4^{n-k_1-k_2}$.

For the syndrome of a codeword $c= (c_1, c_2, c_3)$ we get
\begin{equation*}
\begin{pmatrix}
D & E & \Id_{n-k_1-k_2} \\
2F & 2\Id_{k_2} & 0
\end{pmatrix} \begin{pmatrix}
c_1^\top \\ c_2^\top \\ c_3^\top
\end{pmatrix} = \begin{pmatrix}
Dc_1^\top+Ec_2^\top+c_3^\top \\ 2Fc_1^\top+2c_2^\top 
\end{pmatrix} = \begin{pmatrix}
s_1^\top \\ 2s_2^\top
\end{pmatrix}.
\end{equation*}
The syndromes $s=(s_1, 2s_2)$ are such that $s_1 \in \mathbb{Z}_4^{n-k_1-k_2}$ and $s_2 \in \mathbb{Z}_2^{k_2}$.

To compute the number of vectors in $\mathbb{Z}_4^{n}$ having Lee weight $w$, we have to sum over all choices of   $i$ entries having Lee weight 2, of course only until $\lfloor w/2 \rfloor$. For the rest of the $n-i$ entries we are missing a Lee weight of $w-2i$. We will achieve this with entries of Lee weight 1, where for each of the $w-2i$ entries,   there are two choices: either 1 or 3. We will introduce the following notation for the amount of these vectors:
\begin{equation*}
c(n,w) := \sum_{i=0}^{\lfloor w/2 \rfloor} \binom{n}{i}  \binom{n-i}{w-2i}2^{w-2i}.
\end{equation*}
With the Gray isometry we have that the number of vectors in $\mathbb{Z}_4^{n}$ having Lee weight $w$ is the same as the number of vectors in $\mathbb{F}_2^{2n}$ having Hamming weight $w$, which is simply given by 
$\binom{2n}{w}.$
Note that one can also check that 
\begin{equation}\label{amountlee}
c(n,w) = \sum_{i=0}^{\lfloor w/2 \rfloor} \binom{n}{i}  \binom{n-i}{w-2i}2^{w-2i} = \binom{2n}{w}.
\end{equation}

With this we can easily derive an analogue of the binary Gilbert-Varshamov bound for quaternary codes.

\begin{proposition}[Theorem 13.73, \cite{berlekampbook}]\label{gv}
Let $n$ and $d$ be positive integers. There exists a linear binary code $\mathcal C$  of length $n$ and minimum Hamming distance $d$, such that
\begin{equation*} 
 \mid \mathcal{C} \mid \geq  \frac{2^n }{\sum_{j=0}^{d-1}\binom{n}{j}}.
 \end{equation*}
 Furthermore there exists a linear quaternary code $\mathcal C$  of length $n$ and minimum Lee distance $d$, such that
\begin{equation*} 
 \mid \mathcal{C} \mid \geq  \frac{4^n }{(\sum_{j=0}^{d-1}\binom{2n}{j}-1)3+1}.
 \end{equation*}
\end{proposition}

\subsection{Information set decoding algorithms}\label{subsecstern}

Many ISD algorithms and improvements have been suggested to Prange's simplest form of ISD (see for example \cite{chabanne, canteautsendrier, chabaud, dumer, kruk,  vantilborg}); in historical order the proposed ISD algorithms are by Prange \cite{prange}, Leon \cite{leon}, Lee-Brickell \cite{leebrickell},  Stern \cite{stern}, Canteaut and Chabaud \cite{canteaut}, Finiasz and Sendrier \cite{sendrier},  Bernstein, Lange and Peters \cite{ballcoll}, May, Meurer and Thomae \cite{MMT}, Becker, Joux, May and Meurer \cite{BJMM} and the latest improvement is by May and Ozerov \cite{MO}. 

All of the above mentioned ISD algorithms were proposed over the binary field. However, with new variants of the McEliece cryptosystem proposed over general finite fields, some of the mentioned ISD algorithms have been generalized to $\mathbb{F}_q$: Coffey and Goodman \cite{coffey}  generalized Prange's algorithm to $\mathbb{F}_q$, in \cite{peters} Peters generalized the algorithms by Lee-Brickell and Stern. 
Niebuhr, Persichetti, Cayrel, Bulygin and Buchmann \cite{Niebuhr} generalized the algorithm of Finiasz-Sendrier with efficiency improvements by using partial knowledge of attackers to a general finite field. In \cite{ballcollFq} Interlando, Khathuria, Rohrer, Rosenthal and Weger generalized the ball-collision algorithm by Bernstein, Lange and Peters to $\mathbb{F}_q$.
In \cite{hirose} Hirose generalized the May-Ozerov algorithm to $\mathbb{F}_q$. And Meurer generalized the algorithm of Becker, Joux, May and Meurer in \cite{meurer}. \\

The general idea of ISD algorithms is to guess an information set $I \subset \{1, \ldots n\}$ of size $k$ and the right distribution of the error vector corresponding to this information set, such that we can recover the message from this information set. 
In the algorithms we consider the information set $I$ will be chosen randomly in each outer loop of the algorithm.  Nevertheless we want to note that  there is a slightly smarter way to do so, see \cite{canteaut}, by reusing some elements of $I$ in the next iteration and only adding missing elements. For simplicity, we will just use a random choice.

Once we have chosen an information set $I$, we need to guess the error vector having the assumed weight distribution. In the binary case this means we just have to guess the locations of the errors. 
In Lee-Brickell's algorithm \cite{leebrickell}, the distribution of the error vector is assumed to be $w$ in the information set and $t-w$ outside the information set. 
In Stern's algorithm the error vector has weight $2v$ in the information set coordinates; moreover, these $2v$ errors are  assumed to be located in two disjoint coordinate sets, both having weight $v$. In addition, we assume that there is a zero-window of size $\ell$, where no errors are allowed. The remaining error weight of $t-2v$ is found in the remaining $n-k-\ell$ coordinates.

The average complexity of ISD algorithms  is  given by the cost of one iteration times the average number of iterations needed, which is given by the inverted success probability. Note that the success criterion is to choose the correct weight distribution of the error vector.
The success probability of having correctly chosen $w$ errors in $k$ coordinates over all vectors having length $n$ and Hamming weight $t$ is given by
$$\binom{k}{w} \binom{n}{t}^{-1} .$$

While on classical computers ISD attacks with a high cost of one iteration but a low number of iterations outperform ISD attacks with a low cost of one iteration and a high number of iterations, this is not the case for quantum computers. In fact: in \cite{GroverISD} it was observed that using Grover's algorithm within ISD attacks reduces the number of iterations needed, thus when using a quantum computer ISD attacks with a low cost of one iteration, such as Lee-Brickell's algorithm, might outperform ISD attacks a low number of iterations, such as Stern's algorithm. This is why we will adapt both, Lee-Brickell's algorithm and Stern's algorithm, to the Lee metric.

We will start with explaining the ISD algorithm of Lee-Brickell \cite{leebrickell} over the binary field with respect to the Hamming distance. 
For this we are going to use the following notation. 
 For $S \subset    \{1, \ldots, n \}$ of size $\ell$,  we denote by $\mathbb F_2^n (S)$  the vectors living in $\mathbb F_2^n $ having support in $S$. The projection of $x \in \mathbb F_2^n (S)$ to $\mathbb F_2^{\ell}$ is   denoted by $\pi_S(x)$. On the other hand we denote by $\sigma_S(x)$ the canonical embedding of $x \in \mathbb F_2^{\ell}$ to $ \mathbb F_2^n (S)$.
   We are given the parity check matrix $H \in \mathbb{F}_2^{(n-k) \times n}$, the amount $t$ of errors we can correct and the syndrome  $s \in \mathbb{F}_2^{n-k}$. We want to find a vector $e \in \mathbb{F}_2^n$,   such that $\text{wt}_H(e) = t$ and  $He^{\top} = s$. The algorithm is formulated in Algorithm \ref{leebrickellbinary}.

 \begin{algorithm}[ht!]
\caption{Lee-Brickell's algorithm over $\mathbb{F}_2$}\label{leebrickellbinary}
\begin{flushleft}
Input: The  parity check matrix $H \in \mathbb{F}_2^{(n-k) \times n}$, $s\in\mathbb{F}_2^{n-k}$ and the positive integer $w \in \mathbb{Z}$, such that  $w \leq t$, $w \leq k$ and $t-w \leq n-k$.\\ 
Output: $e\in\mathbb{F}_2^n$ with $He^{\top}=s^\top$ and $wt_H(e)=t$.
\end{flushleft}
\begin{algorithmic}[1]
\State Choose an information set $I\subset \{1, ...,n\}$  of size $k$, let $J= \{1, ...,n\} \setminus I$.
\State Find an invertible matrix $U\in\mathbb{F}_2^{(n-k) \times (n-k)}$ such that $(UH)_{J}=\Id_{n-k}$ and $(UH)_I=A$, where $A \in\mathbb{F}_2^{(n-k) \times k}$.
\State Compute $Us^\top=s'^\top$.
		\For{each $e_1 \in\mathbb{F}_2^n(I)$, with $wt_H(e_1)=w$}
		\If{$wt_H(s'+\pi_I(e_1)A^\top)=t-w$: }
		\State  Output: $e=e_1 +\sigma_{J}(s'+\pi_I(e_1)A^\top)$.
		\EndIf
		\EndFor
		\State  Start over with Step 1 and a new selection of $I$.
\end{algorithmic}
\end{algorithm}

To illustrate the algorithm, assume that the information set is $I=\{1, \ldots, k\}$. We get
\begin{equation*}
UHe^{\top} = \begin{pmatrix}
A  & \Id_{n-k}  
\end{pmatrix} \begin{pmatrix}
e_1^\top   \\ e_2^\top
\end{pmatrix} = s'^\top =Us^\top,
\end{equation*}
where $A \in \mathbb{F}_2^{(n-k) \times k}$  and  $e_1 \in \mathbb{F}_2^k,  e_2 \in \mathbb{F}_2^{n-k}$.
From this we get the condition
\begin{equation*}
 e_1A^\top +e_2 = s'.
\end{equation*}
The part $e_1$ is chosen to have weight $w$, and the part $e_2$ is chosen  such that its support is disjoint from $e_1$, it has the remaining weight $t-w$, and $e_2=s'+e_1 A^\top$.

\begin{remark}
In all the following complexity analyses we will use schoolbook long multiplication with a computational complexity of $n^2$ operations for inputs of length $n$. We remark that faster multiplication algorithms are known and can be used in our ISD algorithms; however, we refrain from using them since they would not make a substantial difference in our analyses, but on the other hand make the formulas more complicated.
\end{remark}

\begin{theorem}
The average number of bit operations Algorithm \ref{leebrickellbinary} needs is approximately
\begin{align*}
& \left( \binom{k}{w} \binom{n-k}{t-w}  \right)^{-1} \binom{n}{t} \cdot \left[  (n-k)^2(n+1) + \binom{k}{w}(w+1)(n-k) \right].
\end{align*}
\end{theorem}
\begin{proof}

As a first step we need to find the  systematic form of the permuted parity check matrix, and the corresponding syndrome form. As a broad estimate we use the complexity of computing $U[H \mid s^\top]$, which takes approximately $(n-k)^2(n+1)$ bit operations.

For all  $e_1 \in\mathbb{F}_2^n(I)$ having $wt_H(e_1)=w$, which are $\binom{k}{w}$ many, we have to check, if the weight of $e_2=s'+\pi_I(e_1)A^\top$ is $t-w$, hence this step costs $(w+1)(n-k)$ bit operations. 

The success probability is given by having chosen the correct weight distribution of the error vector, i.e., in the information set the weight $w$   and in $J$ the missing weight $t-w$:
\begin{equation*}
 \binom{k}{w} \binom{n-k}{t-w}  \binom{n}{t}^{-1}.
\end{equation*}
Hence, the overall cost of this algorithm is as claimed.
\end{proof}

In the following we explain Stern's algorithm over the binary field with respect to the Hamming distance.
  We will use a formulation of the algorithm, which matches the ball-collision formulation in \cite{ballcoll}. The two algorithms differ in the zero-window of size $\ell$: in Stern's algorithm no error is allowed in the zero-window, whereas in the ball-collision algorithm, this window is split into $Y_1, Y_2$ and $q_1, q_2$ errors are allowed respectively. Even though the asymptotic complexity of the ball-collision algorithm is smaller, for concrete parameters it turns out that, in most of the cases, $q_1=q_2=0$ is the most efficient choice, therefore we will generalize Stern's algorithm to $\mathbb{Z}_4$ in this paper.
Nevertheless, using the (ball-)collision formulation allows us to use improvements and speed ups, some of which will be explained in the following, together with their complexities:

\begin{enumerate}
\item
The concept of \emph{intermediate sums} presented in \cite{ballcoll} is important whenever one wants to  do a computation for all vectors in a certain space. Consider, for example, the setting where we are given a binary $k \times n$ matrix $A$ and want to compute $Ax^{\top}$ for all $x \in \mathbb{F}_2^n$, of weight $w$. This would usually cost $k(w-1)$ additions and $w$ multiplications, for each $x \in \mathbb{F}_2^n$. But if we first compute $Ax^{\top}$, where $x$ has weight one, this only outputs the corresponding column of $A$ and has no costs. From there we can compute the sums of two columns of $A$, there are $\binom{n}{2}$ many of these sums and each one costs $k$ additions. From there we can compute all sums of three columns of $A$, which are $\binom{n}{3}$ many. Using the sums of two columns, we only need to add one more column costing $k$ additions. Proceeding in this way, until one reaches the weight $w$, computing $Ax^{\top}$ for all $x \in \mathbb{F}_2^n$ of weight $w$, costs $k  ( L(n,w) -n)$ bit operations, where 
\begin{equation*}
L(n,w) := \sum_{i=1}^w \binom{n}{i}.
\end{equation*}
Note that the we need to take away the cost of the weight one vectors, since they are for free.

\item
The next concept, called \emph{early abort} (also presented in  \cite{ballcoll}), is important whenever a computation is done while checking the weight of the result. For example one wants to compute $x+y$, where $x,y \in \mathbb{F}_2^n$, which usually costs $n$ additions, but we only proceed in the algorithm  if $wt_H(x+y) = t$. Hence we compute and check the weight simultaneously, and if the weight of the partial solution exceeds $t$, we do not need to continue. For the Hamming weight one expects a randomly chosen bit to have weight 1 with probability $\frac{1}{2}$, hence after $2t$ we should reach the wanted weight $t$, and after $2(t+1)$ we should exceed the weight $t$. Hence on average we expect  to compute only $2(t+1)$ many bits of the solution, before we can abort.

\item
The third concept is the idea of using \emph{collisions}. Instead of going through all possible error vectors of weight $2v$, fulfilling certain properties, we can split this process. For this we consider vectors of weight $v$ in one set $S$, and vectors with disjoint weight $v$ in another set $T$, such that two vectors in the intersection of $S$ and $T$ determine the final error vector. The exact definition of the two sets and the properties the vectors need to fulfill will become clearer when we describe the actual algorithm, but we can assume that we need to find 
$x,y$ from some $S\subseteq \mathbb F_2^n$ and $T\subseteq \mathbb F_2^n$, respectively, such that $Ax^{\top} = By^{\top} +s^\top$ for some prescribed $A,B\in \mathbb F_2^{k\times n}$ and $s\in \mathbb F_2^{k}$. 
Assuming that $x,y$ are uniformly distributed, the average amount of collisions is given by 
\begin{equation*}
\mid S \mid \cdot \mid T \mid  \cdot 2^{-n}  .
\end{equation*}
The assumption of a uniform distribution is commonly used, and justified e.g.\ in \cite{ballcoll} and references therein.
\end{enumerate}

The setting for the algorithm is as follows: 
   We are given the parity check matrix $H \in \mathbb{F}_2^{(n-k) \times n}$, the amount $t$ of errors we can correct and the syndrome  $s \in \mathbb{F}_2^{n-k}$. We want to find a vector $e \in \mathbb{F}_2^n$,   such that $\text{wt}_H(e) = t$ and  $He^{\top} = s^\top$. We are going to use all the ideas mentioned above. Stern's algorithm over the binary in the collision formulation is given  in Algorithm \ref{sternbinary}. Note that, without formulating it in detail, the concept of intermediate sums is used in lines 6 and 7, and the concept of early abort is used in line 10. The collision is used in lines 8 and 9, since it is the same $a$ in both cases.

\begin{algorithm}[ht!]
\caption{Collision ISD (Stern's algorithm) over $\mathbb{F}_2$}\label{sternbinary}
\begin{flushleft}
Input: The  parity check matrix $H \in \mathbb{F}_2^{(n-k) \times n}$, $s\in\mathbb{F}_2^{n-k}$ and the positive integers $v, m_1,\, m_2,\, \ell \in \mathbb{Z}$, such that $k=m_1+m_2$, $v\leq m_1, m_2$, $\ell \leq n-k$ and $t-2v\leq n-k-\ell$.\\ 
Output: $e\in\mathbb{F}_2^n$ with $He^{\top}=s^\top$ and $wt_H(e)=t$.
\end{flushleft}
\begin{algorithmic}[1]
\State Choose an information set $I\subset \{1, ...,n\}$  of size $k$.
\State  Choose a set $J \subset \{1, \ldots, n \} \setminus I$ of size $\ell$.
\State Choose a uniform random partition of $I$ into disjoint sets $X$ and $Y$ of size $m_1$ and $m_2=k-m_1$, respectively.
\State Find an invertible matrix $U\in\mathbb{F}_2^{(n-k) \times (n-k)}$ such that $(UH)_{I^C}=\Id_{n-k}$ and $(UH)_I=\begin{pmatrix}
	A_1 \\ A_2
	\end{pmatrix}$, where $A_1\in\mathbb{F}_2^{\ell \times k}$ and $A_2\in\mathbb{F}_2^{(n-k-\ell)\times k}$.
\State Compute $Us^\top= \begin{pmatrix}
	s_1^\top \\ s_2^\top
	\end{pmatrix}$ with $s_1\in\mathbb{F}_2^{\ell}$ and $s_2\in\mathbb{F}_2^{n-k-\ell}$.
\State Compute the set $S$ consisting of all pairs $(\pi_I(e_X)A_1^\top, e_X)$, where $e_X \in\mathbb{F}_2^n(X)$, $wt_H(e_X)=v$.	
\State Compute the set $T$ consisting of all pairs $(\pi_I(e_Y)A_1^\top+s_1,e_Y)$, where $e_Y \in\mathbb{F}_2^n(Y)$, $wt_H(e_Y)=v$.
\For{each $(a, e_X)\in S$}
		\For{each $(a, e_Y) \in T$}\label{collstep}
		\If{$wt_H(\pi_I(e_X + e_Y)A_2^\top+s_2)=t-2v$: }\label{stepweight}
		\State  Output: $e=e_X + e_Y +\sigma_{J}(\pi_I(e_X + e_Y)A_2^\top+s_2)$.
		\EndIf
		\EndFor
		\EndFor
		\State  Start over with Step 1 and a new selection of $I$.
\end{algorithmic}
\end{algorithm}

Let us illustrate the algorithm in the easiest situation, where the information set is $I=\{1, \ldots, k\}$ and the zero window is $\{k+1, \ldots, k + \ell\}$. We get
\begin{equation*}
UHe^{\top} = \begin{pmatrix}
A_1 & \Id_{\ell} & 0 \\
A_2 & 0 & \Id_{n-k-\ell}
\end{pmatrix} \begin{pmatrix}
e_1^\top \\ 0 \\ e_2^\top
\end{pmatrix} = \begin{pmatrix}
s_1^\top \\ s_2^\top
\end{pmatrix} =Us^\top,
\end{equation*}
where $A_1 \in \mathbb{F}_2^{\ell \times k}$, $A_2 \in \mathbb{F}_2^{(n-k-\ell) \times k}$ and  $e_1 \in \mathbb{F}_2^k,  e_2 \in \mathbb{F}_2^{n-k-\ell}, s_1 \in \mathbb{F}_2^{\ell}, s_2 \in \mathbb{F}_2^{n-k-\ell}$.
From this we get the conditions 
\begin{eqnarray*}
 e_1A_1^\top  &=& s_1, \\
e_1A_2^\top + e_2 &=& s_2.
\end{eqnarray*}
The part $e_1$ is chosen to be $ \pi_I(e_X + e_Y)$ such that it has weight $2v$, and with the collision we ensure that the first condition is satisfied. The part $e_2$ is chosen  such that its support is disjoint from $e_1$, it has the remaining weight $t-2v$, and the second condition is satisfied, i.e., $e_2=\pi_I(e_X+e_Y)A_2^\top +s_2$.
Therefore 
\begin{align*}
& UHe^{\top} = \begin{pmatrix}
A_1 & \Id_{\ell} & 0 \\
A_2 & 0 & \Id_{n-k-\ell}
\end{pmatrix} \begin{pmatrix}
\pi_I(e_X+e_Y)^\top \\ 0 \\ A_2\pi_I(e_X + e_Y)^\top +s_2^\top
\end{pmatrix}  \\
& =\begin{pmatrix}
A_1 \pi_I(e_X+ e_Y)^\top \\ A_2\pi_I(e_X + e_Y)^\top+A_2\pi_I(e_X+ e_Y)^\top +s_2^\top
\end{pmatrix}
=
 \begin{pmatrix}
s_1^\top \\ s_2^\top
\end{pmatrix} =Us^\top,
\end{align*}
i.e., $e=(e_1,0,e_2)$ fulfills $He^{\top}=s^\top$ and $wt_H(e)=t$. \\

\begin{theorem}
The average number of bit operations Algorithm \ref{sternbinary} needs, is  
\begin{align*}
& \left( \binom{m_1}{v} \binom{m_2}{v} \binom{n-k -\ell}{t-2v} \right)^{-1} \binom{n}{t} \cdot \left[  (n-k)^2(n+1) + \ell (L(m_1,v)-m_1) \right.  \\
& \left. +  \ell \left( L(m_2,v)-m_2 + \binom{m_2}{v}\right) +   \binom{m_1}{v}\binom{m_2}{v}  2^{-\ell+1} (t-2v+1)(2v+1) \right].
\end{align*}
\end{theorem}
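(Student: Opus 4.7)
The plan is to write the total expected cost as the product of the expected number of outer iterations and the average cost of a single iteration, and then to decompose the per-iteration cost along the lines of Algorithm~\ref{sternbinary}.

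First I would compute the success probability of a single iteration. The algorithm succeeds when the randomly chosen information set $I$ (together with the partition $X\sqcup Y$ and zero-window $J$) isolates exactly $v$ errors in $X$, exactly $v$ errors in $Y$, zero errors in $J$, and the remaining $t-2v$ errors in $\{1,\dots,n\}\setminus(I\cup J)$. Since the error vector is fixed, this is a standard hypergeometric calculation, giving
\begin{equation*}
\Pr[\text{success}] \;=\; \binom{m_1}{v}\binom{m_2}{v}\binom{n-k-\ell}{t-2v}\binom{n}{t}^{-1},
\end{equation*}
so the expected number of outer iterations is the reciprocal of this quantity, which is the first factor of the claimed bound.

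Next I would account for the cost of one iteration step-by-step. Step 4 is a Gaussian elimination on the parity-check matrix, which by the standard estimate contributes $(n-k_1)^2(n+1)$ bit operations (the same bookkeeping used in the analyses of \cite{ballcoll,peters}). For Step 6 I would apply the intermediate-sums technique from item (1) of the preceding discussion: computing $A_1\pi_I(e_X)$ over all $e_X\in\mathbb F_2^n(X)$ of weight $v$ amounts to summing at most $v$ columns of an $\ell\times m_1$ submatrix, so by the telescoping identity the total cost is $\ell(L(m_1,v)-m_1)$. Step 7 is analogous for the set $T$, contributing $\ell(L(m_2,v)-m_2)$ for the sums, plus an additional $\ell\binom{m_2}{v}$ for adding the constant $s_1$ to each of the $\binom{m_2}{v}$ results; this yields the term $\ell(L(m_2,v)-m_2+\binom{m_2}{v})$. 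For Steps 8--10, under the standard uniformity heuristic from item (3), the number of colliding pairs in the outer loop is on average $|S|\cdot|T|\cdot 2^{-\ell}=\binom{m_1}{v}\binom{m_2}{v}2^{-\ell}$. For each collision I would use early abort from item (2): computing a single coordinate of $A_2\pi_I(e_X+e_Y)+s_2$ costs $2v+1$ bit operations, and under the random-bit heuristic one expects to inspect only $2(t-2v+1)$ coordinates before either confirming weight $t-2v$ or exceeding it. This produces the final bracketed term $\binom{m_1}{v}\binom{m_2}{v}2^{-\ell+1}(t-2v+1)(2v+1)$.

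Summing these four contributions gives the bracketed expression in the theorem, and multiplying by the inverted success probability yields the stated formula. The main obstacle is not algebraic but rather careful accounting: justifying that the intermediate-sums construction really only charges $\ell$ bit additions per new partial sum (so that weight-one vectors are free and only $L(m_1,v)-m_1$ new sums must be formed), and verifying that the uniformity heuristic used both in the collision count and in the $2(t-2v+1)$ early-abort estimate is consistent with the way $X,Y,J,I$ are sampled. Once these bookkeeping points are nailed down, the rest is direct substitution.
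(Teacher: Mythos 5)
Your proposal matches the paper's own proof essentially step for step: the same inverted hypergeometric success probability, the same $(n-k)^2(n+1)$ Gaussian-elimination estimate, the same intermediate-sums accounting for $S$ and $T$ (including the extra $\ell\binom{m_2}{v}$ for adding $s_1$), and the same collision count $\binom{m_1}{v}\binom{m_2}{v}2^{-\ell}$ combined with the early-abort cost $2(t-2v+1)(2v+1)$. No substantive differences to report.
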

\begin{proof}
\begin{enumerate}
 \item
As a first step we need to find the  systematic form of the permuted parity check matrix, and the corresponding syndrome form. As a broad estimate we use the complexity of computing $U[H \mid s^\top]$, which takes approximately $(n-k)^2(n+1)$ bit operations.

\item
 To build the set $S$ one has to compute $\pi_{I}(e_X)A_1^\top$  for all $e_X \in \mathbb{F}_2^n(X)$ of weight $v$. Using intermediate sums this costs 
 \begin{equation*}
\ell (L(m_1,v)-m_1).
 \end{equation*}
 
 \item
The set $T$  is built similarly, since one has to compute  $\pi_{I}(e_Y)A_1^\top+s_1$  for all $e_Y \in \mathbb{F}_2^n(Y)$ of weight $v$.  Using intermediate sums this costs
  \begin{equation*}
   \ell \left( L(m_2,v)-m_2+ \binom{m_2}{v}\right),
 \end{equation*}
where $\ell\binom{m_2}{v}$ is the complexity of adding $s_1$ to $\pi_{I}(e_Y)A_1^\top$ for each $e_Y \in \mathbb{F}_2^n(Y)$ of weight $v$.
 \item
 In the next step we want to check for  collisions between the set $S$ and $T$. The set $S$  consists of all $e_X$, where $e_X \in \mathbb{F}_2^n(X)$ has weight $v$. Hence $S$ is of size $\binom{m_1}{v}$ and similarly the set $T$ is of size $\binom{m_2}{v}$. The  collision lives in $\mathbb{F}_2^{\ell}$, hence if we assume an uniform distribution we have to check on average 
 \begin{equation*}
 \frac{\binom{m_1}{v}\binom{m_2}{v}}{2^{\ell}}
 \end{equation*} many collisions. 
For each collision we have to compute $\pi_I(e_X + e_Y)A_2^\top+s_2$ and we only proceed if the weight of this is $t-2v$. With the method of early abort we only have to compute on average $2(t-2v+1)$ many entries. Each entry of the solution costs $2v+1$ bit operations. 

\item
This sums up to the cost of one iteration being

\begin{align*}
\ \quad \quad c(n, k, t, v, m_1, m_2, \ell) = &  (n-k)^2(n+1) \\ 
 & + \ell (L(m_1,v)-m_1)  +  \ell \left( L(m_2,v)-m_2+ \binom{m_2}{v}\right) \\
 & + \frac{\binom{m_1}{v}\binom{m_2}{v}}{2^{\ell}}2 (t-2v+1)( 2v+1).
\end{align*}

The success probability is given by having chosen the correct weight distribution of the error vector, i.e. in $X$ the weight $v$, in $Y$ the weight $v$,  and in $J$ the missing weight $t-2v$:
\begin{equation*}
s(n,k,t,v, m_1, m_2, \ell) = \binom{m_1}{v}\binom{m_2}{v}\binom{n-k -\ell}{t-2v}\binom{n}{t}^{-1}.
\end{equation*}
Hence the overall cost of this algorithm is given  as in the claim by  
 $$ c(n, k, t, v, m_1, m_2, \ell) \cdot s(n,k,t,v, m_1, m_2, \ell)^{-1}.$$
  \end{enumerate}

\end{proof}

  \section{Information set decoding over $\mathbb{Z}_4$}\label{ISD over Z4}
  
 In this section we adapt our previous formulation of Lee-Brickell's and Stern's algorithm (in the collision formulation) for $\mathbb Z_4$-linear codes. For both algorithms, we first formulate the algorithm and illustrate how and why they work, before we determine their complexities. 
 
 Before explaining the algorithms, we determine the complexity of computing $Ax^{\top}$, for $x \in \mathbb{Z}_4^n$ having Lee weight $w$ and a given matrix $A \in \mathbb{Z}_4^{k \times n}$, since this will be used in both algorithms.
 \begin{lemma}\label{lem13}
 Let $A \in \mathbb{Z}_4^{k \times n}$ and $x \in \mathbb{Z}_4^n$ with $wt_L(x)=w$. Then computing
 $Ax^{\top}$ needs at most $2(w-1)k$ bit operations.
  \end{lemma}
 \begin{proof}
 Whenever an entry of $x$ is a $1$, we need to add a column, and whenever it is a $3$ we need to subtract a column. Both, a column addition or subtraction, cost at most $2k$ bit operations. If the entry of $x$ is $2$, we have to add a column twice; but the entry also has Lee weight $2$, i.e., per Lee weight one we still get $2k$ operations. Thus, the overall complexity is $2(w-1)k$.
 \end{proof}
 
  \subsection{Lee-Brickell's algorithm over $\mathbb{Z}_4$} 
 
Recall that the systematic form of the parity check matrix is permutation equivalent to \eqref{parity}. 
For our purpose however, it is enough  to consider the systematic form as
 \begin{equation*}
 \begin{pmatrix}
A & \Id_{n-k_1-k_2} \\
 2C &   0
 \end{pmatrix},
 \end{equation*}
 where $A \in \mathbb{Z}_4^{(n-k_1-k_2)\times (k_1+k_2)}$ and $C \in \mathbb{Z}_2^{k_2 \times  (k_1+k_2)} $.  The algorithm is given in Algorithm \ref{leebrickellnew}.
 
 To illustrate the algorithm, let us again assume that the chosen information set is $\{1, \ldots, k\}$. Then there exists an invertible $U\in\mathbb{Z}_4^{(n-k_1)\times(n-k_1)} $, such that
  \begin{equation*}
 UHe^{\top} = \begin{pmatrix}
 A &  \Id_{n-k_1-k_2}  \\
 2C  & 0  
 \end{pmatrix}\begin{pmatrix}
 e_1^\top \\ e_2^\top
 \end{pmatrix} = \begin{pmatrix}
 s_1^\top \\ 2s_2^\top  
 \end{pmatrix} =Us^\top,
 \end{equation*}
 where $s_1 \in \mathbb{Z}_4^{n-k_1-k_2}, s_2 \in \mathbb{Z}_2^{k_2}$  and $e_1 \in \mathbb{Z}_4^{k_1+k_2}, e_2  \in \mathbb{Z}_4^{n-k_1-k_2}$. From this we get the conditions
 \begin{align*}
 e_1A^\top +e_2= & s_1, \\
 2e_1C^\top = & 2s_2. 
 \end{align*}
  We will choose $e_1$ having Lee weight $w$ and such that the second condition is satisfied. The first condition will then be satisfied by the choice of $e_2$ and we check that $e_2$ has the remaining Lee weight $t-w$.
  
\begin{algorithm}[h!]
\caption{Lee-Brickell's Algorithm over $\mathbb{Z}_4$}\label{leebrickellnew}
\begin{flushleft}
Input: The $(n-k_1)\times n$ parity check matrix $H$ over $\mathbb{Z}_4$, the syndrome $s\in\mathbb{Z}_4^{n-k_1}$ and the positive integers $t,w  \in \mathbb{Z}$, such that  $w \leq 2(k_1+k_2)$, $w  \leq t$ and  $t-w\leq 2(n-k_1-k_2)$.\\ 
Output: $e\in\mathbb{Z}_4^n$ with $He^{\top}=s^\top$ and $wt_L(e)=t$.
\end{flushleft}
\begin{algorithmic}[1]
\State Choose a quaternary information set $I \subset \{1, \ldots, n\}$ of size $k_1+k_2$, let $J=  \{1, \ldots, n\} \setminus I$.
\State  Find an invertible matrix $U\in\mathbb{Z}_4^{(n-k_1)\times(n-k_1)} $, such that  \begin{equation*}
(UH)_{I} = \begin{pmatrix}
A \\  2C
\end{pmatrix},  \quad (UH)_{J}  = \begin{pmatrix}
 \Id_{n-k_1-k_2} \\ 0
\end{pmatrix}, 
\end{equation*}
  where $A \in \mathbb{Z}_4^{(n-k_1-k_2)\times (k_1+k_2)}$ and $C \in \mathbb{Z}_2^{k_2 \times  (k_1+k_2)} $.
\State Compute $Us^\top= \begin{pmatrix} s_1^\top \\ 2s_2^\top \end{pmatrix}$, where  $s_1 \in \mathbb{Z}_4^{n-k_1-k_2}, s_2 \in \mathbb{Z}_2^{k_2}$ .
	\For{$e_1 \in \mathbb{Z}_4^n(I), wt_L(e_1)=w$}
	\If{$2\pi_I(e_1)C^\top=2s_2$}
	\If{$wt_L( s_1- \pi_{I}(e_1)A^\top)=t-w$}
	\State  Output: $e= e_1+ \sigma_{J}(s_1- \pi_{I}(e_1)A^\top)$
				\EndIf
		\EndIf
		\EndFor
		\State  	Start over with Step 1 and a new selection of $I$.
\end{algorithmic}
 
\end{algorithm}

  \subsection{Complexity analysis of Lee-Brickell's algorithm over $\mathbb{Z}_4$}

We now estimate the complexity of  Lee-Brickell's algorithm over $\mathbb{Z}_4$. We assume that one addition or one multiplication over $\mathbb Z_4$ costs $2$ binary operations each. 
We remark here that if  a lookup table for the multiplication and addition is used, the cost of one multiplication as well as the cost of one addition over $\mathbb{Z}_4$  will be only one bit. All the following costs, however, will be given without using a lookup table.  
Moreover, as in the binary case, we use school-book long multiplication instead of faster multiplication algorithms.

\begin{theorem}
The average number of bit operations Algorithm \ref{leebrickellnew} needs, is  

$$ \frac{ \binom{2n}{t} }{ \binom{2(k_1+k_2)}{w}   \binom{2(n-k_1-k_2 )}{t-w} } \left[ 2(n-k_1)^2(n+1) + \binom{2(k_1+k_2)}{w}2(w(n-k_1)-k_2)  \right].$$
\end{theorem}
 
 \begin{proof}
 
As a first step we need to find the (permuted) quaternary systematic form of the parity check matrix, and the corresponding syndrome form. As a broad estimate we use the complexity of computing $U[H \mid s^\top]$, which takes approximately
$(n-k_1)^2(n+1)$ quaternary operations, i.e.,
 $2(n-k_1)^2(n+1)$ bit operations.

As a next step, we compute $2\pi_I(e_1)C^\top$ for all $e_1 \in \mathbb{Z}_4^n(I)$ having $wt_L(e_1)=w$. 
With Lemma \ref{lem13} this costs $2(w-1)k_2$ bit operations for one choice of $e_1$.  
Analogously, we get that computing $\pi_{I}(e_1)A^\top$ costs $2(w-1)(n-k_1-k_2)$ bit operations. Thus, computing $s_1-\pi_{I}(e_1)A^\top$ costs $2w(n-k_1-k_2)$ bit operations.
Since there are $\binom{2(k_1+k_2)}{w}$ many such $e_1$ we get an overall cost of
$$\binom{2(k_1+k_2)}{w}(2w(n-k_1)-2k_2)  .$$

The success probability is given by having chosen the correct weight distribution of the error vector, i.e., weight $t$ in the information set,  and the missing weight $t-w$  in $J$:
\begin{equation*}
  \binom{2(k_1+k_2)}{w} \binom{2(n-k_1-k_2)}{t-w}  \binom{2n}{t}^{-1}.
\end{equation*}
It follows that the overall cost of the algorithm is as claimed.
  \end{proof}

    \subsection{Collision ISD (Stern's algorithm) over $\mathbb{Z}_4$}

As in Lee-Brickell's algorithm, we first bring the parity check matrix into systematic form, according to the chosen information set. Because of the zero window of length $\ell$ we now split the matrix into three block rows, instead of two. If we assume that the  information set is $I=\{1,\dots,k_1+k_2\}$ and the zero window is $\{k_1+k_2+1, \ldots, k_1+k_2+\ell\}$, we get the following situation:
  \begin{equation*}
 UHe^{\top} = \begin{pmatrix}
 A &  \Id_{\ell} & 0 \\
 B & 0 & \Id_{n-k_1-k_2-\ell} \\
 2C  & 0 & 0
 \end{pmatrix}\begin{pmatrix}
 e_1^\top \\ 0  \\ e_2^\top
 \end{pmatrix} = \begin{pmatrix}
 s_1^\top \\ s_2^\top \\ 2s_3^\top
 \end{pmatrix} =Us^\top,
 \end{equation*}
 where $s_1 \in \mathbb{Z}_4^{\ell}, s_2 \in \mathbb{Z}_4^{n-k_1-k_2-\ell}, s_3 \in \mathbb{Z}_2^{k_2}$ and  $A \in \mathbb{Z}_4^{\ell \times (k_1+k_2)}$, \\ $B \in \mathbb{Z}_4^{(n-k_1-k_2-\ell) \times (k_1+k_2)}$,  $C \in \mathbb{Z}_2^{k_2 \times  (k_1+k_2)}$ and $e_1 \in \mathbb{Z}_4^{k_1+k_2}, e_2  \in \mathbb{Z}_4^{n-k_1-k_2-\ell}$. From this we get the conditions
 \begin{align*}
 e_1A^\top= & s_1, \\
 e_1B^\top +e_2 = & s_2, \\
 2e_1C^\top = & 2s_3. 
 \end{align*}
  We will choose $e_1$ and $e_2$ having disjoint Lee weight $2v$ and $t-2v$, respectively. In order to satisfy the first and the third condition, which only depend on $e_1$, we will check for a collision in the algorithm. The second condition will be satisfied by the choice of  $e_2$.
  Thus, compared to the binary version we only get the extra conditions $2e_1C^\top=2s_3$ on $e_1$. The rest is analogous. 
In fact we choose 
$e_1 = \pi_{I}(e_X + e_Y)$ and $e_2 =   s_2- e_1B^\top=s_2-\pi_{I}(e_X + e_Y)B^\top$, where $I$ is the quaternary information set, and  $X$ and $Y$ are partitions of $I$.  Therefore we get
  \begin{equation*}
 UHe^{\top}   =\begin{pmatrix}
 A\pi_{I}(e_X + e_Y)^\top \\
 B \pi_{I}(e_X + e_Y)^\top +s_2^\top-B\pi_{I}(e_X + e_Y)^\top \\
 2C\pi_{I}(e_X + e_Y)^\top 
\end{pmatrix}  =  \begin{pmatrix}
 s_1^\top \\ s_2^\top \\ 2s_3^\top
 \end{pmatrix} =Us^\top.
 \end{equation*}
 
 The final collision algorithm is formulated in Algorithm \ref{sternnew}. As in the binary case, we implicitly assume that we use intermediate sums in lines 6 and 7, early abort in line 10 and  the speed up by using collisions in lines 8 and 9.

\begin{algorithm}[h!]
\caption{Collision ISD (Stern's algorithm) over $\mathbb{Z}_4$}\label{sternnew}
\begin{flushleft}
Input: The $(n-k_1)\times n$ parity check matrix $H$ over $\mathbb{Z}_4$, the syndrome $s\in\mathbb{Z}_4^{n-k_1}$ and the positive integers $v, m_1, m_2,  \ell \in \mathbb{Z}$, such that $k_1+k_2 = m_1 + m_2$, $v \leq 2m_1$, $v \leq 2m_2$,  $2v\leq t$ and $t-2v \leq 2(n-k_1-k_2- \ell)$.\\ 
Output: $e\in\mathbb{Z}_4^n$ with $He^{\top}=s^\top$ and $wt_L(e)=t$.
\end{flushleft}
\begin{algorithmic}[1]
\State Choose a quaternary information set $I \subset \{1, \ldots, n\}$ of size $k_1+k_2$.
\State Choose a set $Z \subset \{1, \ldots, n\} \setminus I$, of size $\ell$ and define $J =\{1, \ldots ,n\} \setminus (I \cup Z)$. 
\State Partition $I$ into two disjoint sets $X$ and $Y$  of size $m_1$ and $m_2= k_1+k_2-m_1$ respectively.
\State\label{find Unew}  Find an invertible matrix $U\in\mathbb{Z}_4^{(n-k_1)\times(n-k_1)} $, such that  \begin{equation*}
(UH)_{I} = \begin{pmatrix}
A \\ B \\ 2C
\end{pmatrix}, \quad (UH)_{Z} 
  = \begin{pmatrix}
\Id_{\ell} \\0\\ 0
\end{pmatrix}, \quad (UH)_{J}  = \begin{pmatrix}
0 \\ \Id_{n-k_1-k_2-\ell} \\ 0
\end{pmatrix}, 
\end{equation*}
 where $A \in \mathbb{Z}_4^{\ell \times (k_1+k_2)}, B \in \mathbb{Z}_4^{(n-k_1-k_2-\ell) \times (k_1+k_2)}, C_1 \in \mathbb{Z}_2^{k_2 \times  (k_1+k_2)}$.
\State Compute $Us^\top= \begin{pmatrix} s_1^\top \\ s_2^\top \\ 2s_3^\top  \end{pmatrix}$, where  $s_1 \in \mathbb{Z}_4^{\ell}, s_2 \in \mathbb{Z}_4^{n-k_1-k_2-\ell}$ and $s_3 \in \mathbb{Z}_2^{k_2}$.
\State Compute the following set \label{buildS}
	\begin{equation*}
	S= \{(\pi_{I}(e_X)A^\top,2\pi_{I}(e_X)C^\top, e_X)  | \  e_X \in \mathbb{Z}_4^n(X), wt_L(e_X) = v \}.
	\end{equation*}
\State Compute the following set \label{buildT}
\begin{equation*}
T=\{(s_1-\pi_{I}(e_Y)A^\top,  2s_3-2\pi_{I}(e_Y)C^\top, e_Y) | \   e_Y \in \mathbb{Z}_4^n(Y), wt_L(e_Y)=v \}.
\end{equation*}
	\For{$(a,  b, e_X)\in S$}
	\For{$(a, b, e_Y)\in T$} \label{collTnew}
	\If{$wt_L( s_2- \pi_{I}(e_X+e_Y)B^\top)=t-2v$}
	\State  Output: $e= e_X+e_Y+ \sigma_{J}(s_2-\pi_{I}(e_X+e_Y)B^\top)$
				\EndIf
		\EndFor
		\EndFor
		\State  	Start over with Step 1 and a new selection of $I$.
\end{algorithmic}
 
\end{algorithm}

\subsection{Complexity analysis of collision ISD over $\mathbb{Z}_4$}

First we determine the complexities of the separate speed up concepts used in the main part of the algorithm. 
 
  \begin{enumerate}
\item
Intermediate sums: The concept is the same over $\mathbb Z_4$ as over $\mathbb F_2$. 
With
\begin{equation*}
\bar{L}(n,w) := \sum_{i=1}^w c(n,i) = \sum_{i=1}^w \binom{2n}{i}
\end{equation*}
and Lemma \ref{lem13} we get that the cost of computing $Ax^{\top}$, for all $x \in \mathbb{Z}_4^n$ of Lee weight $w$, is $2k( \bar{L}(n,w)-2n)$ bit operations.

\item 
Early abort: This concept changes slightly when using the Lee weight over $\mathbb{Z}_4$. On average we can expect $1/2$ of the entries to have weight 1, $1/4$ of the entries to have weight $2$ and $1/4$ of the entries to have weight 0. Hence, on average, we should reach Lee weight $t$ after $(\frac{1}{2}+2 \frac{1}{4})^{-1}t =t$ additions and therefore calculate $t+1$ entries, before we can abort the computation.

\item
Collisions: The average amount of collisions one needs to check between elements living in $\mathbb{Z}_4^n$ of a set $S$ and a set $T$, under the assumption of a uniform distribution (analogously to the binary case), is given by 
$$ \mid S \mid \cdot \mid T \mid \cdot 4^{-n}. $$
\end{enumerate}

\begin{theorem}
The average number of bit operations Algorithm \ref{sternnew} needs, is 
\begin{align*}
& \frac{\binom{2n}{t}}{ \binom{2m_1}{v} \binom{2m_2}{v} \binom{2(n-k_1-k_2-\ell)}{t-2v} }  \left[ 2(n-k_1)^2(n+1) \right. \\
 &  + 2 \ell \left(\bar{L}(m_1,v) +\bar{L}(m_2,v)-2m_1-2m_2+\binom{2m_2}{v}\right) \\
& +k_2\left(L(m_1,v)+L(m_2,v)-m_1-m_2+2+\binom{2m_2}{v}\right)  \\
&  \left.  + \frac{c(m_1,v)c(m_2,v)}{2^{k_2+2\ell}} (t-2v+1)(4v-2)  \right],
\end{align*}
assuming that $v\geq 1$.
\end{theorem}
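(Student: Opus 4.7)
The proof parallels the binary case but incorporates three features of the $\mathbb Z_4$ setting: the Lee-weight counting identity $c(n,w)=\binom{2n}{w}$ from \eqref{amountlee}, the assumption that one $\mathbb Z_4$-operation costs two bit operations, and the extra syndrome constraint $2Ce_1=2s_3$ coming from the systematic form \eqref{parity}. The plan is to bound the cost of a single iteration and then multiply by the reciprocal of the success probability; the six summands inside the bracketed factor of the claim will correspond, in order, to the six substantive operations of Algorithm \ref{sternnew}.

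I would start with the success probability. A correct guess requires $\pi_X(e)$ and $\pi_Y(e)$ to have Lee weight $v$ and $\pi_J(e)$ to have Lee weight $t-2v$; by \eqref{amountlee} the numbers of such vectors are $\binom{2m_1}{v}$, $\binom{2m_2}{v}$ and $\binom{2(n-k_1-k_2-\ell)}{t-2v}$, while the total number of Lee-weight-$t$ vectors in $\mathbb Z_4^n$ is $\binom{2n}{t}$. Dividing gives the outer prefactor.

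For the per-iteration cost I would proceed step by step. Gaussian elimination in Step~\ref{find Unew} takes $(n-k_1)^2(n+1)$ $\mathbb Z_4$-operations, i.e.\ $2(n-k_1)^2(n+1)$ bit operations. Building the $A$-components of $S$ and $T$ via intermediate sums costs $2\ell\bar L(m_1,v)$ and $2\ell\bar L(m_2,v)$: the $A$-block is genuinely over $\mathbb Z_4$, so we sum over all Lee-weight-$\le v$ vectors, and unlike the binary case even weight-$1$ vectors require work because scaling by $3$ in $\mathbb Z_4$ is not free. The additive shifts by $s_1$ and $2s_3$ in the definition of $T$ contribute $2\ell\binom{2m_2}{v}$ and $k_2\binom{2m_2}{v}$ bit operations. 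The $2C$-block, whose columns lie in $2\mathbb Z_4\cong\mathbb F_2$, depends only on the residue modulo $2$, so its intermediate sums can be run in $\mathbb F_2$ and range only over Hamming-weight-$\le v$ indicators; this yields the single term $k_2(L(m_1,v)-m_1+L(m_2,v)-m_2)$. Finally the colliding coordinates live in $\mathbb Z_4^\ell\times\mathbb F_2^{k_2}$, a set of size $2^{2\ell+k_2}$, so the uniformity heuristic yields an expected collision count of $c(m_1,v)c(m_2,v)/2^{2\ell+k_2}$; for each collision the early-abort argument lets us verify $wt_L(s_2-B\pi_I(e_X+e_Y))=t-2v$ by computing on average $t-2v+1$ coordinates, each requiring at most $2v$ multiplications, $2v$ additions and one subtraction in $\mathbb Z_4$, so $8v+2$ bit operations apiece. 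Summing these six contributions and multiplying by the reciprocal of the success probability produces the formula in the statement.

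The main subtlety is the accounting for the $2C$-block: because multiplication by $2$ collapses $\mathbb Z_4$ onto $\mathbb F_2$, its intermediate sums run over $L(\cdot,v)$ with $\mathbb F_2$-arithmetic rather than $\bar L(\cdot,v)$ with $\mathbb Z_4$-arithmetic, which is why the formula mixes the two counting functions. A second small point is the early-abort threshold: a uniform entry of $\mathbb Z_4$ has expected Lee weight $\tfrac12\cdot1+\tfrac14\cdot2+\tfrac14\cdot0=1$, so the partial weight reaches $t-2v$ after about $t-2v$ coordinates and the abort triggers after $t-2v+1$.
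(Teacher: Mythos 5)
Your proposal is correct and follows essentially the same route as the paper: the same factorization into (success probability)$^{-1}$ times per-iteration cost, the same term-by-term accounting of Gaussian elimination, intermediate sums for the $A$- and $2C$-blocks (with the mod-$2$ collapse explaining the mix of $\bar L$ and $L$), the syndrome shifts, and the collision/early-abort step. The only cosmetic difference is your bookkeeping of the $8v+2$ figure ($2v$ multiplications plus $2v+1$ additions/subtractions), which is if anything more internally consistent than the paper's own count.
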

 
 \begin{proof}
 \begin{enumerate}
 \item
As a first step we need to find the (permuted) quaternary systematic form of the parity check matrix, and the corresponding syndrome form. As a broad estimate we use the complexity of computing $U[H \mid s^\top]$, which takes approximately
$(n-k_1)^2(n+1)$ quaternary operations, i.e.,
 $2(n-k_1)^2(n+1)$ bit operations.

\item
 To build the set $S$ one has to compute $\pi_{I}(e_X)A^\top$ and $2\pi_{I}(e_X)C^\top$ for all \\
 $e_X \in \mathbb{Z}_4^n(X)$ of Lee weight $v$. Using intermediate sums the former costs  $2\ell (\bar{L}(m_1,v)-2m_1)$ and the latter costs $k_2 (L(m_1,v)-m_1+1)$, since $C$ lives in $\mathbb{Z}_2^{k_2 \times (k_1+k_2)}$ and $2\pi_{I}(e_X)$ lives in $2\mathbb{Z}_2^{k_1+k_2}$.  Hence, computing $S$ costs in total 
  \begin{equation*}
2\ell (\bar{L}(m_1,v)-2m_1) +  k_2 (L(m_1,v)-m_1+1) 
 \end{equation*}
binary operations.
 
 \item
The set $T$  is built similarly, since one has to compute  $s_1-\pi_{I}(e_Y)A^\top$ and $2s_3-2\pi_{I}(e_Y)C^\top$ for all $e_Y \in \mathbb{Z}_4^n(Y)$ of weight $v$.  Using intermediate sums 
we get a complexity of
\begin{equation*}
  2 \ell \left(\bar{L}(m_2,v)-2m_2 +\binom{2m_2}{v}\right) +k_2\left( L(m_2,v)-m_2+1 + \binom{m_2}{v}\right).
 \end{equation*}
 binary operations.

 \item
 In the next step we want to check for the \textit{two} collisions between the set $S$ and $T$. The set $S$  consists of all $e_X$, where $e_X \in \mathbb{Z}_4^n(X)$ has weight $v$. Hence $S$ is of size $\binom{2m_1}{v}$ and similarly the set $T$ is of size $\binom{2m_2}{v}$. The first collision lives in $\mathbb{Z}_4^{\ell}$, whereas the second collision lives in $2\mathbb{Z}_2^{k_2}$. We assume an uniform distribution and hence have to check on average 
 \begin{equation*}
 \frac{\binom{2m_1}{v}\binom{2m_2}{v}}{2^{k_2+2\ell}}
 \end{equation*} many collisions. 
For each collision we have to compute $s_2-\pi_{I}(e_X+e_Y)B^\top$. With the method of early abort we only have to compute on average $t-2v+1$ entries. By Lemma \ref{lem13}, each entry of the solution costs $4v-2$ binary operations (assuming that $v\geq 1$). 

\item
This sums up to the cost of one iteration being
 \begin{align*}
\quad \quad c(n, k_1, k_2, t, & m_1, m_2, v, \ell) :=  2(n-k_1)^2(n+1) \\
& + 2 \ell (\bar{L}(m_1,v) -2m_1+\bar{L}(m_2,v)-2m_2) \\&
 +2\ell\binom{2m_2}{v}  +k_2\binom{2m_2}{v}  + k_2(L(m_1,v)-m_1+1)\\
 &+k_2(L(m_2,v)-m_2+1)  + \frac{\binom{2m_1}{v}\binom{2m_2}{v}}{2^{k_2+2\ell}} (t-2v+1)(4v-2).
\end{align*}
The success probability is given by having chosen the correct weight distribution of the error vector, i.e. in $X$ the weight $v$, in $Y$ the weight $v$,  and in $J$ the missing weight $t-2v$:
\begin{equation*}
\quad \quad s(n, k_1, k_2, t, m_1, m_2, v, \ell) := \binom{2m_1}{v} \binom{2m_2}{v} \binom{2(n-k_1-k_2-\ell)}{t-2v}  \binom{2n}{t}^{-1}.
\end{equation*}
Hence the overall cost of this algorithm is as in the claim given by  
\begin{align*}
c(n, k_1, k_2, t, m_1, m_2, v, \ell) \cdot s(n, k_1, k_2, t, m_1, m_2, v, \ell)^{-1}.
\end{align*}
  \end{enumerate}
  \end{proof}

\section{Applications: code-based cryptosystems over $\mathbb{Z}_4$}\label{applications to mceliece over Z4}

In this section we state a quaternary version of the McEliece and the Niederreiter cryptosystem. 
For the key generation one chooses a quaternary code $\mathcal{C}$ of length $n$ and type $h=4^{k_1}2^{k_2}$, which has an efficient decoding algorithm and is able to correct up to $t$ errors. We do not propose the use of a specific code, but we note that the secret code needs to come from a family of codes that is large enough and have a large enough error correction capacity $t$, such that brute force attacks on these aspects are not feasible.

\begin{remark}
We assume without loss of generality that the message $x$ lives in $\mathbb{Z}_4^{k_1}\times \mathbb{Z}_2^{k_2}$. Indeed, we can transform any binary string $\bar{x} \in \mathbb{F}_2^{2k_1+k_2}$ into this form by an invertible map before the encryption, and use the inverse map after the decryption.
\end{remark}

\subsection{Quaternary McEliece}\label{mcelieceZ4}
Let $G$ be a $(k_1+k_2) \times n$ generator matrix of $\mathcal{C}$ and choose an  $n \times n$ permutation matrix $P$, this matrix has no further conditions, since the change of columns does not affect the $\mathbb{Z}_2$-part of the message, whereas for the $(k_1+k_2) \times (k_1+k_2)$ invertible matrix $S$, we need further conditions: in the classical case over finite fields, $S$ is just a change of basis, but in the $\mathbb{Z}_4$ case,  changing the rows of the generator matrix affects the position of $\mathbb{Z}_2$-part of the message. Since such a change  hinders the constructor of the cryptosystem to tell where the $\mathbb{Z}_2$-part of the message should be  taken, we will restrict the choice of invertible matrices to the following form: let $S_1$ and $S_2$ be  $k_1 \times k_1$, respectively $k_2 \times k_2$  invertible matrices over $\mathbb{Z}_4$, then $S$ is given by
\begin{equation*}
S= \begin{pmatrix}
S_1 & 0  \\
0 & S_2  
\end{pmatrix}.
\end{equation*} 
Compute $G' = SGP$ and publish $(k_1, k_2, G', t)$.

For the encryption, let $x =(x_1, x_2)$, with $x_1 \in \mathbb{Z}_4^{k_1}$ and $x_2 \in \mathbb{Z}_2^{k_2}$  be the message and choose an error vector $e \in \mathbb{Z}_4^{n}$ of Lee weight $\text{wt}_L(e) \leq t$. The cipher is computed as 
$$ y= xG' + e.$$

For the decryption one computes 
$$yP^{-1}= xSG +eP^{-1}.$$
Since $\text{wt}_L(eP^{-1}) \leq t$ and $SG$ generates the same code as $G$ we can use the decoding algorithm of the code to recover $xS$ and hence the message $x$.

\begin{remark}
To outgo a chosen ciphertext attack (CCA) one can multiply a new permutation matrix to the public generator matrix at each new instantiation of the system, analogously to the classical McEliece system \cite{attackinganddefending}.
\end{remark}

\subsection{Quaternary Niederreiter}\label{Nied}

The quaternary version of the Niederreiter cryptosystem is done in a similar way by using the parity check matrix $H$ and by computing its syndromes for encryption. Since there is no restriction on the message space in the Niederreiter version, there will be no conditions needed on the permutation matrix and on the invertible matrix.

Again, one chooses a quaternary code $\mathcal{C}$ of length $n$ and type $h=4^{k_1}2^{k_2}$, which has an efficient decoding algorithm and is able to correct up to $t$ errors. 

Let $H$ be a $(n-k_1) \times n$ parity matrix of $\mathcal{C}$, choose an invertible $(n-k_1) \times (n-k_1)$ matrix $S$, i.e. $\det(S) \in  \mathbb{Z}_4^{\times}$ and an $n \times n$ permutation matrix $P$.\footnote{Also here we can choose a new $P$ every time to prevent a CCA.}
Compute $H' = S^{-1}HP $ and publish $(k_1, k_2, H', t)$.

For the encryption, let $x \in \mathbb{Z}_4^n$ be the message of Lee weight $\text{wt}_L(x) \leq t$. The cipher is computed as 
$$ y^\top=  H'x^{\top}.$$
For the decryption one computes 
$$Sy^\top= HP x^{\top}.$$
Since $\text{wt}_L(P x^{\top}) \leq t$  we can use the  decoding algorithm of the code to recover $P x^{\top}$ and hence the message $x$.

       \subsection{Key size}\label{sec6}

 To determine the key size we need to count the number of non-prescribed entries of the public generator matrix. For this we assume that the generator matrix is published in quaternary systematic form as in \eqref{gen}. 
 
 This allows us to compute the size  of the generator matrix in the form \eqref{gen}, or equivalently the size of the parity check matrix in the form \eqref{parity}. 
 \begin{theorem}
 The size of the public key, given by the non-prescribed parts of either the generator matrix \eqref{gen} or the parity check matrix \eqref{parity}, is
\begin{equation*}
2(n-k_1-k_2)k_1+(n-k_1-k_2)k_2+k_1k_2=k_1 k_2 +(2k_1+k_2)(n-k_1-k_2)
\end{equation*}
    bits.
\end{theorem}

 In the following we study the key sizes of the proposed cryptographic scheme in Section~\ref{applications to mceliece over Z4} 
 with respect to a given security level against  Algorithm \ref{sternnew} provided in Section \ref{subsecstern}. 
 
Two of the most studied families of $\mathbb{Z}_4$-linear codes are Kerdock  and Preparata codes \cite{Ker}. Because of their small minimum distance Preparata codes are not useful for our cryptosystems. 
Even though Kerdock codes  over  $\mathbb{Z}_4$ satisfy all the conditions needed for the quaternary version of the McEliece cryptosystem, they seem to be a bad choice for key size reasons: while the key size of the cryptosystems doubles going from code length $n= 2^m$ to $2^{m+1}$, the security level only increases by 3 bits.  

For now we leave it as an open problem to find suitable codes for the use in a Lee-metric public key cryptosystem, but we remark that many constructions of Lee codes are known, e.g., \cite{cycliclee,eimearlee,leedense,Ker,goethalslee,perfectlee,leeqr,leebch2, leealternant,leebch}. For the remainder of this paper we will use only theoretical parameters, to illustrate how using the Lee metric could potentially decrease the key sizes in a McEliece or Niederreiter type cryptosystem. 
 We consider quaternary codes achieving the Gilbert-Varshamov bound in the Lee metric, i.e., codes of length $n$ and Lee weight $d$ whose cardinality is at least 
$$ \frac{4^n}{(\sum_{j=0}^{d-1} \binom{2n}{j}-1)3+1}.$$

\begin{example}
As a first example we examine codes of length $n=150$ and minimum Lee distance $d=81$, i.e., we can set $t=40$. The Gilbert-Varshamov bound tells us that such codes with $\mathbb Z_4$-dimension $26=k_1+k_2/2$ exist. We now vary $k_1$ from $1$ to $25$, with $k_2=2(26-k_1)$. Furthermore, we set $m_1=\lceil(k_1+k_2)/2\rceil, 
    m_2=\lfloor(k_1+k_2)/2\rfloor $ and we optimize on the size $\ell$ of the zero-window and the number $2v$ of errors in the information set.
With these parameters we get the following key sizes and security levels in bits, see Table \ref{tab1}.

\begin{table}[ht!]
\begin{tabular}{|c|c|c|c|c|c|c|c|c|c|c|c|}
\hline
$k_1$ & 1 & 2& 3& 4& \dots & 18 & 19& \dots & 24 &25 \\\hline
best $\ell$  & 1& 1&1&1& \dots&1&1&\dots &1&2 \\\hline
best $v$ & 4&4&4&4& \dots&3&3&\dots&3&3\\\hline
key size & 5198 & 5296 & 5390 & 5480 & \dots & 6110 & 6160 & \dots & 6440 & 6446\\\hline
security level  & 31 &31 &31 & 30 &  \dots & 27 & 27 & \dots & 28 & 28 \\\hline
\end{tabular}
\caption{Key sizes and security levels (both in bits) for GV-codes over $\mathbb Z_4$ with $n=150$ and $d=81$.}
\label{tab1}
\end{table}
Note that the above security levels were computed using Algorithm \ref{sternnew}, which always outperforms Algorithm \ref{leebrickellnew} on a classical computer. 
For comparison, a binary code of length $2n=300$, dimension $k=26$ and minimum Hamming distance $d=81$ gives a key size of $7124$ and a security level of $27$ bits with the binary version of Stern's algorithm. Hence, depending on $k_1$ we get at least the same security level with a key size improvement of around $10-28\%$, when using Lee codes over $\mathbb Z_4$.\footnote{
For further comparison, binary codes achieving the Gilbert-Varshamov bound of length $2n=300$ and  
dimension $k=26$ have minimum distance $d=102$ and achieve a security level of $28$ bits with a key size of $7124$ bits.}
\end{example}

In the next example we find codes that theoretically achieve a security level of $128$ bits, against the adaptation of the collision ISD algorithm.

\begin{example}
Given the  relative distance $d/n =0.2$, using an optimization on the size of $v$ and $\ell$ we search for a minimal code length $n$, such that a $k_1$ exists with which the security level of 128 bits is reached. We get $n=425, k=229, k_1=33, k_2=392, t= 42, v=21, \ell=0 $ and key size of 12936 bits.

\end{example}

\begin{remark} The previously obtained theoretical values give much smaller public keys than the classical McEliece system with binary Goppa codes achieves. For this note that for the security level of 128 bits, the proposed parameters for the McEliece system using Goppa codes by Bernstein \textit{et al.} in \cite{attackinganddefending}  are $n= 2960, k = 2288$, which gives a key size of $1537536$ bits. In fact the theoretical key sizes presented here are within the range of quasi-cyclic MDPC codes, which were submitted to NIST for post-quantum code-based cryptosystem (from 10 to 37 kilobits, see \url{https://csrc.nist.gov/projects/post-quantum-cryptography/round-2-submissions}). 
\end{remark}

\section{Generalization from $\mathbb Z_4$ to $\mathbb Z_{p^s}$}\label{generalize}

In this section we give the general idea of how to generalize the ISD algorithm and the two code-based cryptosystems to any Galois ring $\mathbb Z_{p^s}$ for any prime $p$ and $s\in \mathbb N$. 
Recall that the Lee weight of $x \in \mathbb{Z}_{p^s}^n$ is given by $$\text{wt}_L(x) = \sum_{i=0}^n \min\{x_i,p^s-x_i\}.$$
The main modification is in the systematic form of the generator and parity check matrix of the code. In general, a linear code over $\mathbb Z_{p^s}$ has a (column permuted) generator matrix of the form
\begin{equation}\label{gengen}
G = \begin{pmatrix}
\Id_{k_1} & A_{1,2} & A_{1,3} & \dots & A_{1,s} & A_{1,s+1} \\
0 & p\Id_{k_2} & pA_{2,3}& \dots &pA_{2,s} & pA_{2,s+1}\\
0 & 0 & p^2\Id_{k_3}& \dots &p^2A_{3,s} & p^2A_{3,s+1}\\
\vdots & \vdots &\ddots&&\vdots & \vdots\\
0&0&0 & \dots& p^{s-1} \Id_{k_s}& p^{s-1} A_{s,s+1}
\end{pmatrix},
\end{equation}
and a (column permuted) parity check matrix of the form
\begin{equation}\label{parpar}
H = \begin{pmatrix}
B_{1,1} & B_{1,2} & \dots & B_{1,s-1} & B_{1,s} & \Id_{n-K} \\
pB_{2,1} & pB_{2,2} &  \dots &pB_{2,s-1} & p\Id_{k_s} & 0\\
p^2B_{3,1} & p^2B_{3,2} &  \dots &p^2\Id_{k_{s-1}} & 0 & 0\\
\vdots & \vdots &\ddots&&\vdots & \vdots\\
p^{s-1}B_{s,1}& p^{s-1}\Id_{k_2}&\dots& 0& 0 & 0
\end{pmatrix},
\end{equation}
where $K= \sum_{i=1}^s k_i$ and the matrices live in
\begin{eqnarray*}
\text{for} \ j\leq s: A_{i,j} \in \mathbb{Z}_{p^{s+1-i}}^{k_i \times k_j}, \quad  \text{and} \quad A_{i,s+1} \in \mathbb{Z}_{p^{s+1-i}}^{k_i \times (n-K)}\\
\text{for} \  i>1: B_{i,j} \in \mathbb{Z}_{p^{s+1-i}}^{k_{s-j+2} \times k_j}, \quad  \text{and} \quad B_{1,j} \in \mathbb{Z}_{p^{s+1-i}}^{(n-K) \times k_j}.
\end{eqnarray*}
We say that such a code has type $(p^s)^{k_1}(p^{s-1})^{k_2}\dots p^{k_s}$. This is also the cardinality of the code, and the uniquely encodable messages are of the form $(m_1,m_2,\dots,m_s)\in \mathbb Z_{p^s}^{k_1}\times\mathbb Z_{p^{s-1}}^{k_2}\times \dots\times \mathbb Z_p^{k_s}$. If our code has length $n$, an information set is a set $I\subseteq \{1,\dots,n\}$ of size $K$ such that $|\mathcal C_I |=|\mathcal C|$.

\subsection{Information set decoding over $\mathbb Z_{p^s}$}

One can set up an ISD algorithm analogously to Algorithm \ref{sternnew}. Instead of three conditions on $e_1$ and $e_2$ we then get $s+1$ conditions on $e_i$ for $i \in \{1, \ldots s\}$, where $e_s$ is only part of one condition. For $e_i$ with $i \in \{1, \ldots s-1\}$ to  satisfy the $s$ conditions, that are not involving $e_s$, one needs to compute similar sets $S_i$ for $i \in \{1, \ldots, s-1\}$ consisting of  $s$  tuples, find the collisions between them, and lastly choose $e_s$ satisfying the remaining condition. In the appendix we exemplify the described algorithm over $\mathbb{Z}_8$. Note that this ISD algorithm is different to the Lee metric ISD algorithm proposed in \cite{LEE2}\footnote{\cite{LEE2} appeared as a follow-up of this work.}, as there the systematic form is considered to be \begin{equation*}
H= \begin{pmatrix}
A & \text{Id}_{n-K} \\
pB & 0
\end{pmatrix},
\end{equation*} with $A\in \mathbb{Z}_{p^s}^{(n-K) \times K}$ and $B \in \mathbb{Z}_{p^{s-1}}^{(K-k_1)\times K}.$ This choice of systematic form clearly makes the ISD algorithm easier to understand, but does not take into account the particular form of the parity check matrix. We leave it as an open problem to compare the benefits of the two different algorithms.

\subsection{Code-based cryptosystems over $\mathbb Z_{p^s}$}

The Niederreiter system does not need a modification of \ref{Nied} to be used over any $\mathbb Z_{p^s}$. 
In the McEliece cryptosystem \ref{mcelieceZ4} one just needs to make sure that the invertible matrix  $S$ has the correct block diagonal structure to prevent mixing the subcodes that live in different subrings. The rest stays the same.

For the size of the public key note that you can either publish the generator or the parity check matrix. However, it turns out that both ways give you the same key size. The key size in bits related to the generator matrix $G$ as in \eqref{gengen} (or equivalently related to the parity check matrix $H$ as in \eqref{parpar}) is 
$$\sum_{i=1}^s (\sum_{j=i+1}^s k_i k_j \log_2 (p^{s+1-i}) + k_i (n-K)\log_2 (p^{s+1-i})) $$
$$= \sum_{i=1}^s k_i \log_2 (p^{s+1-i}) \sum_{j=i+1}^s  (k_j +  n-K) .$$

\section{Conclusion}\label{conclusion}

The change from the Hamming metric to the rank metric  has recently received a lot of attention in the code-based cryptography community, since the key sizes are very promising. Following this idea, we propose the change to the Lee metric and the ring-linear codes related to this metric.
In this paper we built the framework for the use of quaternary codes in code-based cryptography by generalizing Lee-Brickell's and Stern's ISD algorithm to $\mathbb{Z}_4$. 
This paper also gives the general form of the quaternary version of the McEliece and the Niederreiter cryptosystem.

Here we provide some questions, which might lead to interesting applications and further understanding of  ring-linear codes and the Lee metric from a cryptographic point of view. 
Even though we restricted the focus in this paper to the case $\mathbb{Z}_4$, and explained shortly how to generalize this to $\mathbb{Z}_{p^s}$, it is possible and it might be interesting to generalize this to $\mathbb{Z}_m$, for any $m$.
It is possible that some of the ISD algorithms have a structure that correlates better to the Lee metric, hence there might be other ISD algorithms which should be generalized to $\mathbb{Z}_4$. And the most important question in order to have an application in cryptography is: which codes might be used for the  quaternary version of the McEliece cryptosystem, such that the conditions for the cryptosystem are satisfied and the key size is reasonable?

\section*{Acknowledgments}
We would like to thank Karan Khathuria for fruitful discussions and technical support.

\bibliographystyle{plain}

\bibliography{biblio}

\section*{Appendix}
Here we describe the ISD algorithm from Section \ref{generalize} over $\mathbb{Z}_8$. We consider a code 
$\mathcal{C}\subseteq \mathbb{Z}_8^n$ of type $\mid \mathcal{C} \mid = 8^{k_1}4^{k_2}2^{k_3}$, and define $K:=k_1+k_2+k_3$.
For simplicity let us assume that the  information set is $I=\{1,\dots,K\}$ with $I_1=\{1, \ldots, k_1+k_2\}$, $I_2=\{k_1+k_2+1, \ldots, K\}$ and the zero window is $\{K+1, \ldots, K+\ell\}$.
We first bring the parity check matrix into systematic form, according to the chosen information set, and get:
  \begin{equation*}
 UHe^{\top} = \begin{pmatrix}
 A &  B & \Id_{\ell} & 0 \\
 C & D  & 0 & \Id_{n-K-\ell} \\
 2E & 2 \text{Id}_{k_3}  & 0 & 0 \\
 4F & 0 & 0 & 0
 \end{pmatrix}\begin{pmatrix}
 e_1^\top \\ e_2^\top \\ 0  \\ e_3^\top
 \end{pmatrix} = \begin{pmatrix}
 s_1^\top \\ s_2^\top \\ 2s_3^\top \\4 s_4^\top
 \end{pmatrix} =Us,
 \end{equation*}
 where $s_1 \in \mathbb{Z}_8^{\ell}, s_2 \in \mathbb{Z}_8^{n-K-\ell}, s_3 \in \mathbb{Z}_4^{k_3}, s_4 \in \mathbb{Z}_2^{k_2}$ and  $A \in \mathbb{Z}_8^{\ell \times (k_1+k_2)}$, $B \in \mathbb{Z}_8^{\ell \times k_3}$,  $C \in \mathbb{Z}_8^{(n-K-\ell)\times  (k_1+k_2)}, D \in \mathbb{Z}_8^{(n-K-\ell) \times k_3}, E \in \mathbb{Z}_4^{k_3 \times (k_1 + k_2)}, F \in \mathbb{Z}_2^{k_2 \times (k_1 +k_2)}$ and $e_1 \in \mathbb{Z}_8^{k_1+k_2}, e_2  \in \mathbb{Z}_8^{k_3}, e_3 \in \mathbb{Z}_8^{n-K-\ell}$. From this we get the conditions
 \begin{align*}
 e_1A^\top +e_2B^\top = & s_1  , \\
 e_1C^\top  +e_2D^\top +e_3 = & s_2 , \\
 2 e_1E^\top  +2e_2 = & 2s_3, \\
 4e_1F^\top = & 4s_4 .
 \end{align*}
  We will choose $e_1$ and $e_2$ disjoint both having Lee weight $v$ and $e_3$ having Lee weight $t-2v$. In order to satisfy the first, the third  and the fourth condition, which only depend on $e_1$ and $e_2$, we will check for a collision within the algorithm. The second condition will be satisfied by choosing  $e_3=   s_2- e_1C^\top-e_2D^\top$. The algorithm is provided in Algorithm \ref{sterngen8}.
 
\begin{algorithm}[h!]
\caption{Collision ISD (Stern's algorithm) over $\mathbb{Z}_8$}\label{sterngen8}
\begin{flushleft}
Input: The $(n-k_1)\times n$ parity check matrix $H$ over $\mathbb{Z}_8$, the syndrome $s\in\mathbb{Z}_8^{n-k_1}$ and the positive integers $v,  \ell \in \mathbb{Z}$, such that  $v \leq \min\{ 4(k_1+k_2), 4k_3\}$,  $2v\leq t$ and $t-2v \leq 4(n-K- \ell)$.\\ 
Output: $e\in\mathbb{Z}_8^n$ with $He^{\top}=s^\top$ and $wt_L(e)=t$.
\end{flushleft}
\begin{algorithmic}[1]
\State Choose a quaternary information set $I \subset \{1, \ldots, n\}$ of size $K= k_1+k_2+k_3$ with the corresponding subsets $I_1$ of size $(k_1+k_2)$ and $I_2$ of size $k_3$.
\State Choose a set $Z \subset \{1, \ldots, n\} \setminus I$, of size $\ell$ and define $J =\{1, \ldots ,n\} \setminus (I \cup Z)$. 
\State\label{find Unew2}  Find an invertible matrix $U\in\mathbb{Z}_8^{(n-k_1)\times(n-k_1)} $, such that  \begin{align*}
(UH)_{I_1} & = \begin{pmatrix}
A \\ C \\ 2E \\ 4F
\end{pmatrix}, \quad & (UH)_{I_2} = \begin{pmatrix}
B \\ D \\ 2\Id_{k_3} \\ 0
\end{pmatrix}, \\ (UH)_{Z} 
 & = \begin{pmatrix}
\Id_{\ell} \\0\\ 0 \\ 0
\end{pmatrix}, \quad & (UH)_{J}  = \begin{pmatrix}
0 \\ \Id_{n-K-\ell} \\ 0 \\ 0
\end{pmatrix}, 
\end{align*}
 where  $A \in \mathbb{Z}_8^{\ell \times (k_1+k_2)}$, $B \in \mathbb{Z}_8^{\ell \times k_3}$,  $C \in \mathbb{Z}_8^{(n-K-\ell)\times  (k_1+k_2)}, D \in \mathbb{Z}_8^{(n-K-\ell) \times k_3}, E \in \mathbb{Z}_4^{k_3 \times (k_1 + k_2)}$ and  $F \in \mathbb{Z}_2^{k_2 \times (k_1 +k_2)}$.
\State Compute $Us^\top= \begin{pmatrix} s_1^\top \\ s_2^\top \\ 2s_3^\top \\ 4s_4^\top \end{pmatrix}$, where  $s_1 \in \mathbb{Z}_8^{\ell}, s_2 \in \mathbb{Z}_8^{n-K-\ell}, s_3 \in \mathbb{Z}_4^{k_3}$ and $s_4 \in \mathbb{Z}_2^{k_2}$.
\State Compute the following set \label{buildSgen}
	\begin{equation*}
	S= \{(Ae_1^\top ,2Ee_1^\top , 4Fe_1^\top , e_1)  | \  e_1 \in \mathbb{Z}_8^{k_1+k_2}, wt_L(e_1) = v \}.
	\end{equation*}
\State Compute the following set \label{buildTgen}
\begin{equation*}
T=\{(s_1^\top -Be_2^\top ,  2s_3^\top -2e_2^\top , 4s_4^\top, e_2) | \   e_2 \in \mathbb{Z}_8^{k_3}, wt_L(e_2)=v \}.
\end{equation*}
	\For{$(a,  b, c, e_1)\in S$}
	\For{$(a, b, c, e_2)\in T$} \label{collTnewgen}
	\If{$wt_L( s_2^\top- Ce_1^\top-De_2^\top )=t-2v$}
	\State  Output: $e_{I_1}= e_1, e_{I_2}=e_2, e_Z=0$ and $e_J= s_2- e_1C^\top-e_2D^\top$.
				\EndIf
		\EndFor
		\EndFor
		\State  	Start over with Step 1 and a new selection of $I$.
\end{algorithmic}
 
\end{algorithm}

\end{document}